\documentclass[a4paper,english,numberwithinsect,cref,thm-restate]{lipics-v2021}

\usepackage{microtype} 
\usepackage{paralist}
\usepackage{thmtools}
\usepackage{xspace}
\usepackage[disable]{todonotes}
\usepackage{cleveref}

\crefname{figure}{Figure}{Figures}
\crefname{theorem}{Theorem}{Theorems}
\crefname{lemma}{Lemma}{Lemmas}
\crefname{section}{Section}{Sections}
\crefname{appendix}{Appendix}{Appendices}

\graphicspath{{./figures/}}

\newcommand{\PPG}{path pair graph\xspace}
\newcommand{\xperm}{\pi_x}
\newcommand{\yperm}{\pi_y}

\newcommand{\xpath}{P_x}
\newcommand{\ypath}{P_y}

\crefname{claim}{Claim}{Claims}

\title{Simultaneous Embedding of Two Paths on the Grid}
\titlerunning{Simultaneous Embedding of Two Paths on the Grid}

\author{Stephen Kobourov}{Technical University of Munich, Heilbronn, Germany}{stephen.kobourov@tum.de}{0000-0002-0477-2724}{}
\author{William Lenhart}{Williams College, Williamstown, MA, USA}{wlenhart@williams.edu}{0000-0002-8618-2444}{}
\author{Giuseppe Liotta}{University of Perugia, Perugia, Italy}{giuseppe.liotta@unipg.it}{0000-0002-2886-9694}{}
\author{Daniel Perz}{University of Perugia, Perugia, Italy}{daniel.perz@unipg.it}{0000-0002-6557-2355}{}
\author{Pavel Valtr}{Charles University, Prague, Czech Republic}{valtr@kam.mff.cuni.cz}{}{}
\author{Johannes Zink}{Technical University of Munich, Heilbronn, Germany}{johannes.zink@tum.de}{0000-0002-7398-718X}{}
\authorrunning{S.\ Kobourov, W.\ Lenhart, G.\ Liotta, D.\ Perz, P.\ Valtr, J.\ Zink}

\keywords{grpah drawing, simultaneous embedding, paths}
\ccsdesc[500]{Theory of computation ~ Computational geometry}

\hideLIPIcs
\nolinenumbers

\begin{document}

\maketitle

\begin{abstract}
We study the problem of simultaneous geometric embedding of two paths without self-intersections on an integer grid.
We show that minimizing the length of the longest edge of such an embedding is NP-hard.
We also show that we can minimize in $O(n^{3/2})$ time the perimeter of an integer grid containing such an embedding if one path is $x$-monotone and the other is $y$-monotone.%
\end{abstract}


\section{Introduction}

A simultaneous geometric embedding of a pair of planar graphs sharing the same vertex set consists of two straight-line planar drawings of the graphs in which every vertex has the same location in both drawings and distinct edges intersect in at most one point. If an edge is shared by the two graphs, then it is represented by the same straight-line segment which makes it easy to visually discover common patterns in the drawings. The notion of simultaneous geometric embedding was introduced by Brass et al.~\cite{DBLP:journals/comgeo/BrassCDEEIKLM07} almost twenty years ago. Since then, it has received considerable attention in the graph drawing literature, partly because of the 
theoretical challenges it poses and partly because it lies at the core of dynamic graph visualization. For example, the simultaneous embedding of cycles can be used to show max-degree-4 graphs have geometric thickness two~\cite{duncan2004geometric}. The survey by Bl{\"{a}}sius et al.~\cite{DBLP:reference/crc/BlasiusKR13} provides extensive discussions about simultaneous geometric embeddings and their variants. 

Unsurprisingly, only very restricted pairs of graphs admit a simultaneous geometric embedding. For example, Angelini et al.~\cite{DBLP:journals/jgaa/AngeliniGKN12} show that there exists a path and tree of depth four which do not have a simultaneous geometric embedding, while Cabello et al.~\cite{DBLP:journals/jgaa/CabelloKLMSV11} prove that every tree and a matching admit a simultaneous geometric embedding, but there exist a planar graph and a matching which do not.
On the other hand, Brass et al.~\cite{DBLP:journals/comgeo/BrassCDEEIKLM07} present an elegant algorithm to compute a geometric simultaneous embedding of two paths sharing the same $n$-vertex set, based on the two linear orders of the paths which determine the $x$- and $y$-coordinates of the vertices on an $n\times n$ integer grid.
For the general case, Estrella-Balderrama et al. \cite{EstrellaBalderrama2008Simultaneous} showed that deciding whether two planar graphs admit a simultaneous geometric embedding is NP-complete. 
Moreover, the problem for a linear number of planar graphs is hard in the existential theory of the reals~\cite{DBLP:journals/jgaa/CardinalK15}.

Motivated by the observation that the algorithm by Brass et al.~\cite{DBLP:journals/comgeo/BrassCDEEIKLM07} for two paths does not minimize either the maximum edge length or the total edge lengths, we study these two optimization problems. Specifically, we show that the problem of finding a geometric simultaneous embedding of two paths on a grid is NP-complete when the objective is to minimize the maximum edge length, or to minimize the sum of edge lengths.

We then turn to the monotone variant of the problem, for which we present an $O(n^{3/2})$-time algorithm to minimize the perimeter of the bounding box while guaranteeing that one of the paths is $x$-monotone and the other is $y$-monotone. Our approach is based on formulating an appropriate system of constraints that can be interpreted as a graph problem.

\section{NP-Completeness}
\label{sec:nph}

\begin{restatable}{theorem}{nph}
    \label{thm:nph}
    Simultaneous embedding of two paths on a grid
    is NP-complete for the objectives of
    minimizing the maximum edge length and minimizing the sum of edge lengths.
\end{restatable}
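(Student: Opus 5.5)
Membership in NP is routine. A certificate consists of integer coordinates for the $n$ vertices. We may assume every coordinate is polynomially bounded: the construction of Brass et al.\ gives an embedding on the $n\times n$ grid, so any optimal solution has maximum edge length at most $\sqrt2\, n$ and total length at most $\sqrt2\, n^2$. Because the drawing is connected, all vertices then fit in a box of side $O(n^2)$ after translation. Given such a certificate, we check in polynomial time that every vertex sits on a distinct grid point, that each path is drawn without self-crossings, and that the objective stays below the threshold. The threshold comparison involves sums of square roots. To avoid that issue, we measure edge length in a way that stays polynomially checkable, for instance by comparing squared lengths for the max objective. For the sum objective we follow the convention the paper adopts, or use the $L_1$ length; the reduction is designed so that it works for either.

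For hardness I would reduce from a problem with a rigid, grid-like structure, most naturally \textsc{Planar 3-SAT} or, even simpler, the NP-hard problem of embedding a unit-length path (or tree) on the grid. Concretely, the reduction I would try first uses the fact that deciding whether a graph has an embedding on the grid with all edges of length one is NP-hard. Here a path is \emph{not} directly such a graph, so the second path must supply the constraints. The plan is as follows. The threshold is set to $1$ for the max-length objective, and for the sum objective to $2(n-1)$, which means every edge of both paths must have length exactly $1$. Each path then becomes a self-avoiding walk on the grid, and the two walks visit the same point set. This is the same as asking that the point set carry two Hamiltonian grid paths with prescribed vertex orders. So the central task is to encode an NP-hard problem as ``the same vertex set has two prescribed Hamiltonian orders in the grid graph.'' For this I would reduce from \textsc{3-Partition} or from unit-length embedding of a caterpillar or tree. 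The first path $P_x$ acts as a rigid serpentine ``frame.'' Its consecutive-vertex constraints, combined with adjacency forced by $P_y$, pin down long straight rows. The pieces of $P_y$ then encode items that must fit into bins of fixed width carved out by the frame.

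The key steps, in order, are these. (1) Design a \emph{rigidity gadget}: a pair of subpaths whose unit-edge realizations are unique up to symmetry, such as a $2\times k$ ladder traversed by $P_x$ along the rows and by $P_y$ in a zigzag across the rungs. (2) Assemble these gadgets into bins of capacity $B$. (3) Encode each \textsc{3-Partition} number $a_i$ as a rigid $2\times a_i$ block that must be inserted into some bin. (4) Prove the equivalence: a yes-instance yields a drawing in which every edge has length $1$, and conversely any drawing with all edges of length $1$ forces the bin structure and hence a partition. I would take the strong NP-hardness of \textsc{3-Partition} so that the unary encoding yields polynomial size. Since nonadjacent edges of a unit-length grid drawing cannot cross, the no-self-intersection requirement holds automatically.

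The main obstacle is step (4), specifically the soundness direction. I have to show that the only unit-length realizations of the gadgets are the intended ones, ruling out folding, a bin being bent around a corner, or blocks interleaving. I would first try to make the frame a closed rectangle, so that area counting does the work: the bins have total area exactly $\sum$ block areas, so there is no slack and every cell must be filled exactly. With that in place, a local case analysis at gadget junctions should suffice.
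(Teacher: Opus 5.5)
\textbf{There is a genuine gap: the proposal never constructs a reduction, and the 3-Partition plan breaks at the completeness direction.}

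Your outer framework matches the paper's. With threshold $1$ (or total length equal to the number of edges), every edge must be a unit grid segment, crossings then disappear automatically, and rigidity must come from $4$-cycles of the union graph $P_x\cup P_y$. Your ladder gadget uses this correctly, and your NP-membership discussion is fine (more careful than the paper's ``clear'').

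The problem starts with the roles you assign. Both paths span all vertices, so $P_x$ alone cannot be a ``frame'' with pieces of $P_y$ as ``items.'' Every block lies on both paths, and path-consecutive vertices must be grid neighbours. Hence a block is not a free object you can drop into any bin. Each of the two Hamiltonian paths must enter and leave it, except possibly at a path end. So the block is tied to other blocks or to the frame by at least two unit-edge connections, in a visiting order fixed by the input.

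To let a block reach an arbitrary bin, these connections would have to be long flexible chains. Such chains occupy grid points, which destroys the zero-slack area count your soundness argument rests on. They also may not cross each other or the frame, so they restrict which block-to-bin assignments are realizable at all. For example, non-crossing tethers from fixed attachment points to bins in a row tend to force order-preserving assignments, and under those the packing problem becomes easy. So the real obstacle is completeness: you must show that every valid $3$-partition yields a unit-length drawing. The local case analysis you flag in step~(4) is secondary, and nothing in the proposal addresses completeness.

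The paper avoids this by reducing from \textsc{NotAllEqual3SAT} via a logic engine. Bhatt and Cosmadakis used the same tool for unit-length trees; note that unit-length embedding of a single path, which you list as NP-hard, is trivial. In the paper, the rigid parts (the frame, the shaft segments of the $n$ strikers, and $4m$ flags per striker) are built from attached $4$-cycles. They are joined only by single shared edges or flexible chains of shared edges. The only remaining freedom is mirroring a whole striker about the shaft and flipping each flag left or right. Such binary flips are compatible with every vertex lying on both paths.

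The counting in the paper is also local. In each clause strip, the $n$ flags must place their long sides into only $n-1$ sufficiently wide gaps. So some flag is two-short in the strip above the shaft and some flag is two-short in the strip below it, which is exactly a not-all-equal satisfying assignment.
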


\begin{figure}[p]
    \includegraphics[page=1]{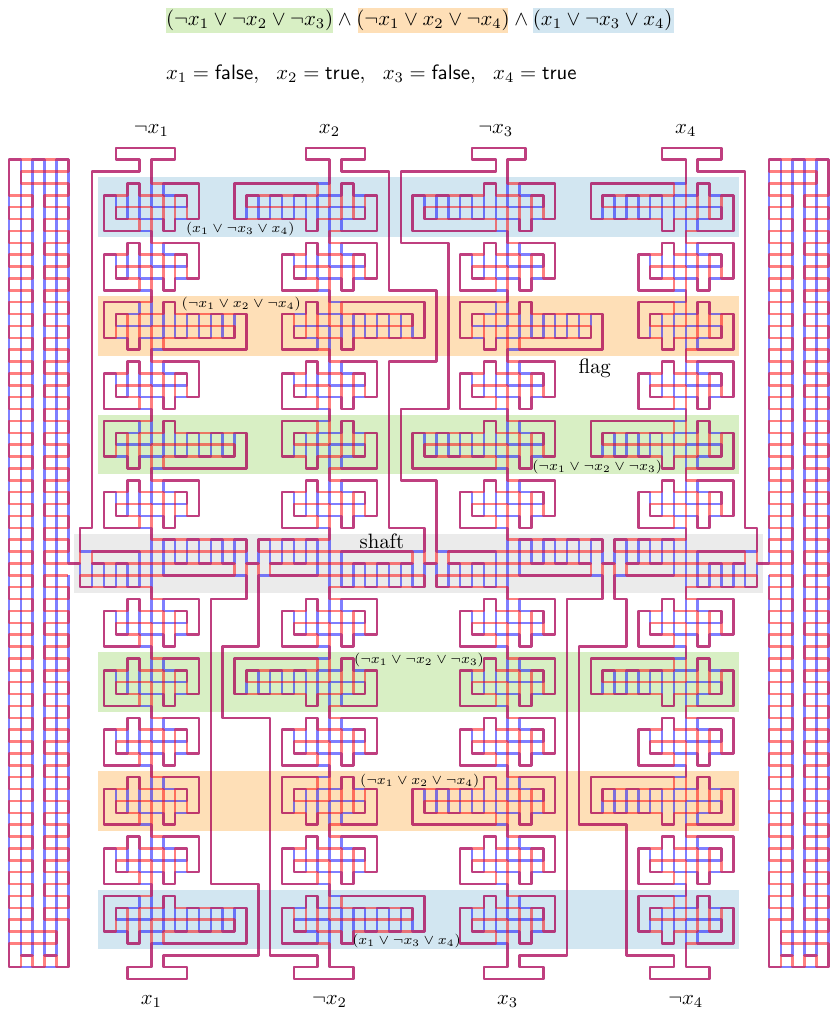}
    \caption{Simultaneous embedding of two paths (all edges have length~1)
        obtained in our NP-hardness reduction from the \textsc{NotAllEqual3SAT} instance given on top.
        Edges of $P_1$ are colored blue, edges of $P_2$ are colored light red,
        and shared edges are colored purple.}
    \label{fig:nph}
\end{figure}

\begin{figure}[h]
    \includegraphics[page=5,width=\linewidth]{figures/hardness-1}
    \caption{The paths $P_1$ and $P_2$ of \cref{fig:nph}.}
    \label{fig:nph1}
\end{figure}


\begin{proof}[Proof Sketch]
    Containment in NP is clear.
    We show NP-hardness by reduction from \textsc{NotAllEqual3SAT}.
    Our reduction resembles the \emph{logic engine}~\cite{DBLP:journals/ipl/BhattC87,DBLP:journals/tcs/EadesW96}.
    For an illustration of our construction see \cref{fig:nph,fig:nph1}.
    Details of the proof are given in \cref{app:nph}.
    
    Let the input \textsc{NotAllEqual3SAT} instance have $n$ variables and $m$ clauses.
    On a high level, we have a rigid frame on the left and on the right.
    In the middle there is a \emph{shaft} connecting the two parts of the frame.
    The shaft consists of $n$ rigid parts connected by single edges.
    Each such rigid part belongs to a \emph{striker};
    in \cref{fig:nph}, there are 4 vertically extending strikers.
    Since each striker is attached to the rest by two single edges,
    there is one degree of freedom to flip the drawing of each striker individually
    (in \cref{fig:nph}, each striker might be mirrored
    along a horizontal axis through the shaft).
    Each striker corresponds to a variable
    and its one drawing represents the assignment true,
    while its flipped drawing represents the assignment false.
    For every clause, there are two horizontal strips (called \emph{clause strips}; colored in \cref{fig:nph},
    one above and one below the shaft).
    Each striker has $4m$ \emph{flags}~-- one per strip and a ``dummy'' flag between two clause strips.
    Each flag is (in its striker) vertically connected to the neighboring flag or the shaft by a single edge
    and, hence, can be flipped to the left or right individually.
    A flag is internally rigid and extends to the left and the right side.
    It can have two short sides (\emph{two-short flag}) or a short and a long side (\emph{long flag});
    see also \cref{fig:nph} where ``flag'' is written.
    All two-short flags are isomorphic
    and also all long flags are isomorphic.
    All flags in the clause strips are long flags unless
    the variable appears in the corresponding clause.
    
    If the variable appears normally in the clause,
    then a two-short flag is in the clause strip above the shaft
    when the orientation of the striker corresponds to true.
    If the variable appears negated in the clause,
    then a two-short flag is in the clause strip above the shaft
    when the orientation of the striker corresponds to false.
    All ``dummy'' flags are two-short flags
    and their orientation is not further relevant.
    For a drawing without crossings,
    the long side of a striker must be drawn in one of $n-1$ gaps between two strikers
    (next to the frame, there is no gap that can fit the long side of a flag).
    Two long sides of flags of neighboring striker cannot be drawn in the same gap.
    This means, in both clause strips of each clause,
    there is at least one two-short flag.
    This corresponds to a truth assignment where at least one literal per clause
    is false and at least one literal per clause is true.
    We can use this construction to show that
    there is a crossing-free drawing with unit length
    if and only if the input \textsc{NotAllEqual3SAT} instance is a yes-instance.
\end{proof}

\section{Minimizing the parameter for two monotone paths}

We consider here the problem of simultaneously embedding two directed paths on the grid such that one of the paths is weakly monotonic in the $x$-direction and the other is weakly monotonic in the $y$-direction.
The \emph{\PPG} of two directed paths $P_x$ and $P_y$ is given by $G = G(P_x,P_y) =(V,E)$, where $V = \{v_1, \ldots, v_n \}$ is the common vertex set of $P_x$ and $P_y$ and $E$ is the union of the sets of directed edges of $P_x$ and~$P_y$.
That is, the simultaneous embedding of $\xpath$ and $\ypath$ is just an embedding of the \PPG $G = G(\xpath,\ypath)$ such that $\xpath$ is weakly $x$-monotonic and $\ypath$ is weakly $y$-monotonic.
We call such an embedding a {\em weakly monotonic grid embedding of $G$ (a WMGE of $G$)}.

\begin{theorem}
    Let $G = G(\xpath,\ypath)$ be be a \PPG. Then we can compute a WMGE of $G$ with minimum perimeter in polynomial time.
\end{theorem}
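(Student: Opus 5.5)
The plan is to reduce the geometric optimisation to a purely combinatorial one over the \emph{flat} edges, and then solve that combinatorial problem as a matching problem. The starting observation is this: in a WMGE, the $x$-coordinates are nondecreasing along $\xpath$ and the $y$-coordinates are nondecreasing along $\ypath$. Call an edge of $\xpath$ \emph{$x$-flat} if its endpoints share an $x$-coordinate, and define \emph{$y$-flat} edges of $\ypath$ analogously. Given any WMGE, we may compress coordinates so that every non-flat edge of $\xpath$ increases $x$ by exactly $1$, and every non-flat edge of $\ypath$ increases $y$ by exactly $1$. I will check that this compression preserves weak monotonicity, distinctness of vertices and non-self-intersection. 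Afterwards, the width equals $(n-1)-f_x$ and the height equals $(n-1)-f_y$, where $f_x$ and $f_y$ count the flat edges. Hence minimising the perimeter is the same as maximising the number $f_x+f_y$ of flat edges.

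Next I will characterise which sets $F$ of flat edges are realisable. Fix $F$. The $x$-flat edges split $\xpath$ into consecutive blocks, the \emph{columns}, and the $y$-flat edges split $\ypath$ into consecutive blocks, the \emph{rows}. Each vertex then receives the pair (index of its column, index of its row), so the whole embedding is determined by $F$. I must express three conditions on $F$.
\begin{enumerate}
\item Distinctness: no column and row share two vertices.
\item $\xpath$ is simple. Within a column, the vertices occur in a vertical line in the order of $\xpath$, and their row indices must be monotone (all increasing or all decreasing) along that order. Otherwise consecutive vertical segments overlap. Segments between different columns are strictly $x$-monotone and cannot cause problems.
\item The symmetric condition for $\ypath$ inside each row.
\end{enumerate}
I will verify that these conditions are also sufficient for a valid WMGE.

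The key step is to show that an optimal $F$ can be restricted to a structure amenable to matching. I would first try to show the following local-exchange statement: whenever a column (or row) contains a long run, we can rearrange flat edges without decreasing $|F|$ so that the constraints become pairwise. Concretely, the constraints should be of the form ``these two candidate flat edges conflict''. The conflicts I expect are:
\begin{itemize}
\item an edge that lies in both paths cannot be flat in both;
\item an $x$-flat edge $uv$ and a $y$-flat edge $u'v'$ must not put two vertices into the same column and row.
\end{itemize}
The hope is that the resulting conflict graph is bipartite between candidate $x$-flat and $y$-flat edges, and that choosing a maximum valid $F$ amounts to a maximum independent set in a bipartite graph. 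By K\"onig's theorem, that is $|V|$ minus a maximum matching. Alternatively, the problem might be phrased directly as a maximum matching problem. In either case Hopcroft--Karp gives $O(n^{3/2})$, since the graph has $O(n)$ vertices and edges.

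The main obstacle is the previous step: proving that the monotonicity conditions within long columns and rows, and the ``share at most one vertex'' condition, collapse into pairwise conflicts that are bipartite and have only $O(n)$ size. If a clean exchange argument fails, I would fall back to dynamic programming along $\xpath$, with states recording the current column's row-direction. That would still give polynomial time, which suffices for the theorem as stated, though not the $O(n^{3/2})$ bound.
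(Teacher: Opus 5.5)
\textbf{There is a genuine gap: the step that makes the problem polynomial is left open, and the conflict structure you predict for it is wrong.} Your reduction to maximizing the number of flat edges (the paper's $0/1$ extents) and your column/row description are fine. No exchange argument is needed, because your monotonicity condition is already local. If $u,v,w$ are consecutive on $P_x$ and $v$ is not a switch vertex of $P_x$ (so $u$ and $w$ lie on opposite sides of $v$ along $P_y$), then $y(u)\le y(v)\le y(w)$ or the reverse holds automatically. Hence, together with distinctness, your condition~2 is equivalent to: no switch vertex of $P_x$ has both incident $P_x$-edges flat. The same holds symmetrically for rows. These switch-pair conflicts join two candidate $x$-flat edges (or two candidate $y$-flat edges). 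They are missing from your list, and they rule out the bipartition ``$x$-candidates versus $y$-candidates'' you were hoping for.

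Conversely, your distinctness condition adds no new conflicts once the switch-pair constraints hold. Suppose $u\neq v$ coincide. The $P_x$-subpath between them consists of $x$-flat edges with no interior switch vertex, so it is weakly $y$-monotone with equal end values, and all its vertices coincide. So you may take $u,v$ consecutive on $P_x$. The symmetric argument puts every interior vertex of the $P_y$-subpath between them strictly between $u$ and $v$ on $P_x$, so there are none. Then $\{u,v\}$ is a shared edge that is flat in both paths. So the sets of non-flat edges are exactly the vertex covers of the graph whose vertices are the edges of $P_x$ and of $P_y$, and whose edges are the switch pairs plus the matching between the two copies of each shared edge.

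\textbf{The other missing ingredient is why this graph is bipartite.} It is, but the bipartition comes from \emph{alignment}. Call an edge of $G$ positively aligned if its endpoints appear in the same order on $P_x$ and on $P_y$. The two edges of a switch pair have opposite alignments, while a matching edge joins two copies of the same edge of $G$. Along any cycle, the alignment flips exactly at switch-pair edges, so a cycle contains an even number of them. Matching edges are the only edges between the $P_x$-side and the $P_y$-side, so a cycle also contains an even number of those. Hence every cycle is even. With this, K\"onig's theorem and Hopcroft--Karp finish the proof as you intended, with the flat edges forming a maximum independent set.

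Your fallback does not close the gap. Row indices depend on which $P_y$-edges are flat, and $P_y$ interleaves arbitrarily with $P_x$. So a dynamic program along $P_x$ whose state is only the current column's direction cannot check the row or distinctness constraints. The constraint graph is a union of two path forests and a matching, with maximum degree~$3$. Without bipartiteness there is no evident polynomial algorithm for its minimum vertex cover.
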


\begin{proof}
Note that $\xpath$ and $\ypath$ induce permutations $\xperm$ and $\yperm$ on $V$ in which $\xperm(i)$ is the $i^{th}$ vertex of $\xpath$ and  $\yperm(i)$ is the $i^{th}$ vertex of $\ypath$.
Thus $E(G) = E = \{(v_{\xperm(i)},v_{\xperm(i+1)}) : 1 \leq i < n\}
    \cup
    \{(v_{\yperm(i)},v_{\yperm(i+1)}) : 1 \leq i < n\}$.
    A WMGE of $G$ consists of assigning to each $v_k$ a distinct location $S_k = (x_k,y_k)$ on the $2D$ integer grid such that $x_{\xperm(i)} \leq x_{\xperm(i+1)}$ and $y_{\yperm(i)} \leq y_{\yperm(i+1)}$ for all $1 \leq i < n$, and any two edges of the embedding intersect in at most $1$ point (no ``segment overlap'').



    



We denote the predecessor and successor of $v$ on $\xpath$ by $p_x(v)$ and $s_x(v)$, respectively; $p_y(v)$ and $s_y(v)$ are defined analogously for path $\ypath$.
An internal vertex $v$ of $\xpath$ is a {\em switch vertex of $\xpath$} if either
\begin{inparaenum}[(i)]
\item both $p_x(v)$ and $s_x(v)$ come before $v$ on $\ypath$ or
\item both come after $v$ on $\ypath$.
\end{inparaenum}
The pair of edges $(p_x(v),v)$ and $(v, s_x(v))$ are called a {\em switch pair of $\xpath$}.
A switch vertex and switch pair of $\ypath$ are defined analogously.
A pair of vertices $\{u,v\}$ forms
a {\em shared edge} \todo{discuss name ``shared edge'', maybe for a later version} if they are consecutive on both paths (although their relative order on each path may be different). Any WMGE of $G$ must satisfy:
\begin{itemize}
    \item For each switch vertex $v$ of $\xpath$, the $x$-coordinates of $p_x(v)$ and $s_x(v)$ differ by at least $1$ (no vertical segment overlap),
    \item for each switch vertex $v$ of $\ypath$, the $y$-coordinates of $p_y(v)$ and $s_y(v)$ differ by at least $1$ (no horizontal segment overlap), and
    \item for any shared edge $\{v_i,v_j\}$ (for some $i, j \in \{1, \dots, n\}$), the sum of the $x$- and $y$-extents of the segment $S_i S_j$
    is at least $1$ (no two points in the same location).
\end{itemize}

Note that in any WMGE of $G$, each embedded edge of $G$ will have non-negative integer $x$- and $y$-extents;
we denote the $x$- and $y$-extents of an embedded edge $e$ by $d_x(e)$ and $d_y(e)$, respectively.
The three properties above can be reformulated as $x$- and $y$-extent constraints on the embedded edges of $\xpath$ and $\ypath$:


\begin{itemize}
    \item For any switch vertex $v$ on $\xpath$, we have $d_x(p_x(v),v) + d_x(v,s_x(v)) \geq 1$,
    \item for any switch vertex $v$ on $\ypath$, we have $d_y(p_y(v),v) + d_y(v,s_y(v)) \geq 1$, and
    \item for any shared edge $\{u,v\}$, we have $d_x(u,v) + d_y(u,v) \geq 1$.
\end{itemize}

In fact, if $\{d_x(e) : e \in \xpath\}$ and $\{d_y(e) : e \in \ypath\}$ are any non-negative integers satisfying these three constraints, then $d_x()$ and $d_y()$, along with integer values for the $x$-coordinate of $v_{\xperm(1)}$  and the $y$-coordinate of $v_{\yperm(1)}$, completely determine a WMGE of $G$:
\begin{enumerate}
\item Let $x(v_{\xperm(1)}) = 0$;
\item let $x(v_{\xperm(i+1)}) = d_x(v_{\xperm(i)},v_{\xperm(i+1)}) + x(v_{\xperm(i)})$, \quad for each $i: 1 \leq i < n$;
\item let $y(v_{\yperm(1)})=0$; \quad and
\item let $y(v_{\yperm(i+1)}) = d_y(v_{\yperm(i)},v_{\yperm(i+1)}) + y(v_{\xperm(i)})$, \quad for each $i: 1 \leq i < n$.
\end{enumerate}


It is not difficult to show that every vertex gets integer $x$ and $y$ coordinates, $\xpath$ is weakly $x$-monotonic and $\ypath$ is weakly $y$-monotonic,
any two edges intersect in at most one point,
no vertex lies in the interior of an edge, and
no two distinct vertices are assigned to the same point.
\todo{for a later version: also justify the first four claims.}


We justify the fifth claim: Let two vertices $u$ and $v$ be assigned to the same locations.
Note that $\{u,v\}$ cannot form a shared edge, since  $d_x(u,v) + d_y(u,v) \geq 1$ holds for shared edges.
Let $x_0$ be the common $x$-coordinate of $u$ and $v$; there must be two consecutive vertices $w$ and $w'$ of $\xpath$ that have $x$-coordinate $x_0$.
Then $\{w,w'\}$ is not an edge of $\ypath$, so there is a unique subpath of $\ypath$ between $w$ and $w'$.
On $\xpath$, one of the vertices of this subpath must come after all of the other vertices of the subpath.
But then this vertex is a switch vertex of $\ypath$ and so its predecessor and successor on $\ypath$ must have vertical distance at least $1$.
Therefore $u$ and $v$ also have vertical distance at least one, and are not in the same location.

Note that we do not need to specify, for non-shared edges, $y$-extents of edges of $\xpath$ or $x$-extents of edges of $\ypath$; these will be implicitly determined by the fact that for any edge $\{u,v\}$ on one of the paths, there is a unique subpath of the other path between $u$ and $v$.

Note that the perimeter of the embedding is just twice the sum of the $x$-extents of the $\xpath$ edges and the $y$-extents of the $y$-path edges, and so does not depend on the assignment of an $x$-coordinate to $v_{\xperm(1)}$  and a $y$-coordinate to $v_{\yperm(1)}$.

Finally, we note that in a minimum perimeter WMGE of $G$, each $d_x$ and $d_y$ value is either $0$ or $1$, since replacing an integer value greater than $1$ with $1$ does not violate any of the extent constraints.
Thus the problem of finding a minimum perimeter grid embedding of $G$ becomes the problem of finding $0 / 1$ values $d_x(e)$ for each edge $e$ of $\xpath$ and $d_y(e)$ for each edge $e$ of $\ypath$ that satisfy the three extent constraints and that minimize
$\sum_{e \in \xpath} d_x(e) +
\sum_{e \in \ypath} d_y(e)$.


\begin{figure}[p]
    \centering
    \includegraphics[page=2]{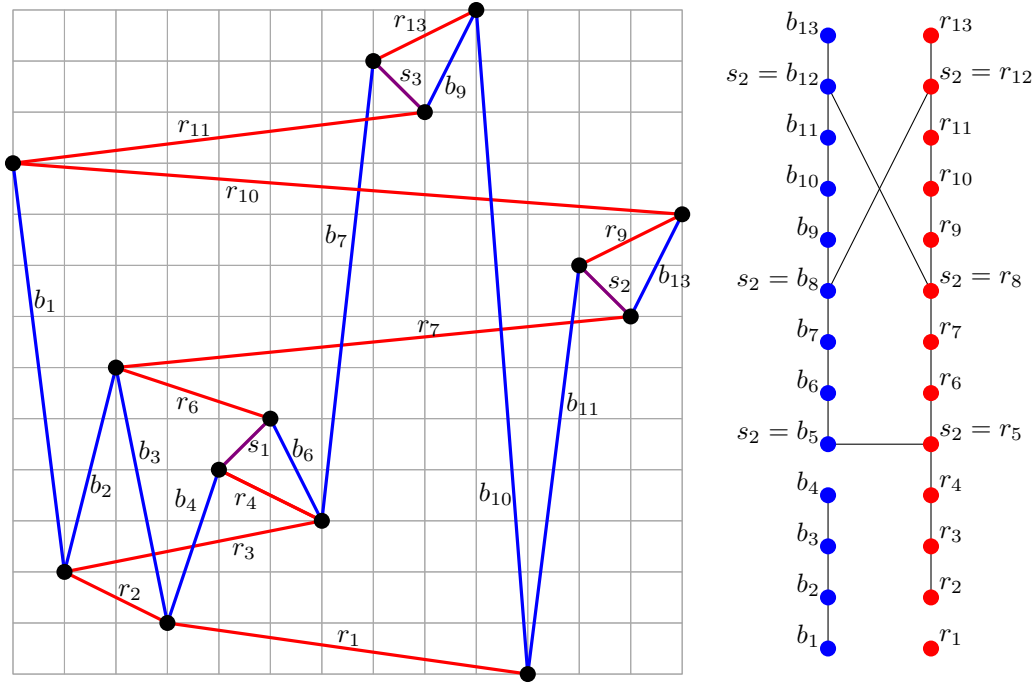}
    \caption{A WMGE of a path pair graph $G$ and the constraint graph $C_G$. The paths of $G$ and their vertices in $C_G$ are red and blue, shared edges are purple.}
    \label{fig:SPE-example}
\end{figure}

\begin{figure}[p]
    \centering
    \includegraphics[page=5]{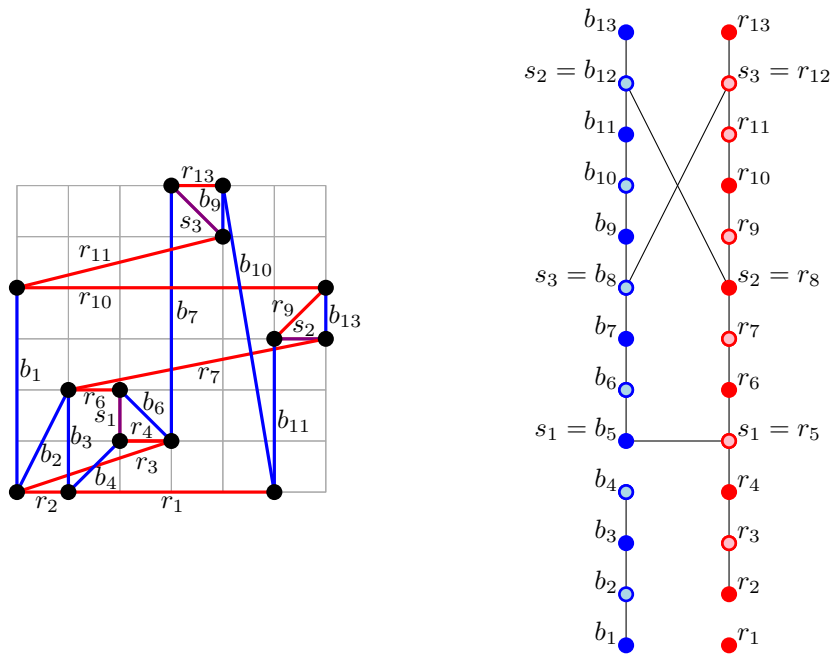}
    \caption{A min-perimeter drawing of $G$ of \cref{fig:SPE-example} and the corresponding minimum vertex cover of $C_G$, indicated by circles. Every blue or red vertex in the vertex cover corresponds to an edge whose endpoints differ by one in the $x$-direction or $y$-direction, respectively.}
    \label{fig:SPE-example_minimized}
\end{figure}

To compute a WMGE of a \PPG $G=(V,E)$, we construct a new graph $C_G = (V_C, E_C)$ from $G$, called the {\em constraint graph of $G$}, see \cref{fig:SPE-example}.
Let $X = \{ e : e \in \xpath\}$ and $Y = \{ e : e \in \ypath\}$.
The vertices and edges of~$C_G$ are defined as follows:
\begin{itemize}
    \item $V_C = V_X \cup V_Y$, where $V_X$ has a distinct vertex for each $e \in X$ and $V_Y$ has a distinct vertex for each $e \in Y$. Shared edges of $G$ therefore appear twice in $V_C$.
    \item $E_C = E_X \cup E_Y \cup M$ where
    \begin{itemize}
        \item $E_X$ is the set of all switch pairs for $\xpath$,
        \item $E_Y$ is the set of all switch pairs for $\ypath$, and
        \item $M$ is a matching connecting each shared edge in $V_X$ to its counterpart in~$V_Y$.
    \end{itemize}
\end{itemize}

Note that the induced subgraphs of $C_G$ obtained by restricting the vertex set to $V_X$ and~$V_Y$, respectively, are each path forests.
It is easy to see that a WMGE of $G$ corresponds to a vertex  cover of $C_G$ since any satisfying assignment of $0 / 1$ values to $d_x()$ and $d_y()$ corresponds to an assignment of $0 / 1$ values to the vertices of $C_G$ such that every edge of $C_G$ has at least one vertex with value~$1$.
The converse also holds.
Further, the size of the cover gives (half) the perimeter of the embedding, so we have transformed the embedding problem into the problem of finding a minimum vertex cover of $C_G$, see \cref{fig:SPE-example_minimized}.

\begin{claim}\label{clm:bipartite}
    For any \PPG $G$, $C_G$ is a bipartite graph.
\end{claim}

\begin{claimproof}
We show that any cycle in $C_G$ must have even length.
Note that for any edge $e = \{u,v\}$ in $G$,  $u$ and $v$ have two induced orderings determined by $\xperm$ and $\yperm$.
We say $e$ has {\em positive alignment} if those orders are the same and {\em negative alignment} otherwise.
Note that any two edges that form a switch pair for one of the two paths will have opposite alignments.
Note also that this allows us to assign an alignment value to each vertex of $C_G$ and that the end vertices of an edge of $C_G$ will have equal alignments exactly when those vertices represent the same shared edge of $G$.

Consider a cycle $C = v_1, \ldots, v_k$ in $C_G$. Each consecutive pair of vertices corresponds to either a switch pair or a shared edge of $G$.
$C$ must cross between $V_X$ and $V_Y$ an even number of times, since $V_X$ and $V_Y$ induce path forests in $C_G$.
Thus there is an even number of edges of $C$ whose end vertices have the same alignment. Hence, $C$ is an even length cycle.
\end{claimproof}

Since $C_G$ is bipartite, a minimum vertex cover of $C_G$ can be found in $O(n^{3/2})$ time by first using the Hopcroft-Karp~\cite{Hopcroft1973} algorithm to compute a maximum matching $M$ for $C_G$ in time $O(|E_C| \sqrt{|V_C|}) = O(n^{3/2})$  then using
the method of Kőnig~\cite{Konig1931} to traverse $C_G$ in linear time building $M$-alternating paths in order compute a minimum vertex cover~\cite[Theorem~5.3]{DBLP:books/others/BondyM76}.
\end{proof}

Interestingly, while this problem can also be solved by using linear programming and modifying the solution until integer values are obtained, our proof that this method works relies on \Cref{clm:bipartite} and our algorithm is much more efficient.

\subparagraph*{Acknowledgments.}
This research was initiated at the Eighteenth Bertinoro Workshop on Graph Drawing 2025.

\bibliography{references}

\clearpage

\appendix


\section*{Appendix}

\section{Omitted Proof of \cref{sec:nph}}
\label{app:nph}

\nph*

\begin{proof}
    Containment in NP is clear since
    a specific assignment of vertices to grid points
    serves as a polynomial-size certificate.
    
    It remains to show NP-hardness.
    We show that it is NP-hard to find
    an assignment such that all edges have length~1.
    Observe that this specific variant implies the NP-hardness
    of both objectives because every edge has length at least~1.
    We reduce from \textsc{NotAllEqual3SAT},
    which is the NP-complete problem of deciding whether
    the Boolean variables of a Boolean formula
    in conjective normal form with 3 literals per clause
    can be assigned truth values such that
    in each clause at least one literal evaluates to true
    and at least one literal evaluates to false.
    Our reduction resembles the \emph{logic engine}~\cite{DBLP:journals/ipl/BhattC87,DBLP:journals/tcs/EadesW96}.
    Due to their complexity, the definition of the paths
    is not given explicitly in words but by \cref{fig:nph},
    while \cref{fig:nph1} illustrates that each path
    individually contains all vertices and admits this crossing-free drawing.
    
    Let the input \textsc{NotAllEqual3SAT} instance have $n$ variables and $m$ clauses.
    On a high level, we have a rigid frame on the left and on the right.
    In the middle there is a \emph{shaft} connecting the two parts of the frame.
    The shaft consists of $n$ rigid parts connected by single edges.
    Each such rigid part belongs to a \emph{striker};
    in \cref{fig:nph}, there are 4 vertically extending strikers.
    Since each striker is attached to the rest by two single edges,
    there is one degree of freedom to flip the drawing of each striker individually
    (in \cref{fig:nph}, each striker might be mirrored
    along a horizontal axis through the shaft).
    Each striker corresponds to a variable
    and its one drawing represents the assignment true,
    while its flipped drawing represents the assignment false.
    For every clause, there are two horizontal strips (called \emph{clause strips}; colored in \cref{fig:nph},
    one above and one below the shaft).
    Each striker has $4m$ \emph{flags}~-- one per strip and a ``dummy'' flag between two clause strips.
    Each flag is (in its striker) vertically connected
    to the neighboring flag or the shaft by a single edge
    and, hence, can be flipped to the left or right individually.
    A flag is internally rigid and extends to the left and the right side.
    It can have two short sides (\emph{two-short flag}) or a short and a long side (\emph{long flag});
    see also \cref{fig:nph} where ``flag'' is written.
    All two-short flags are isomorphic
    and also all long flags are isomorphic.
    All flags in the clause strips are long flags unless
    the variable appears in the corresponding clause.
    
    If the variable appears normally in the clause,
    then a two-short flag is in the clause strip above the shaft
    when the orientation of the striker corresponds to true.
    If the variable appears negated in the clause,
    then a two-short flag is in the clause strip above the shaft
    when the orientation of the striker corresponds to false.
    All ``dummy'' flags are two-short flags
    and their orientation is not further relevant.
    For a drawing without crossings,
    the long side of a striker must be drawn in one of $n-1$ gaps between two strikers
    (next to the frame, there is no gap that can fit the long side of a flag).
    Two long sides of flags of neighboring striker cannot be drawn in the same gap.
    This means, in both clause strips of each clause,
    there is at least one two-short flag.
    This corresponds to a truth assignment where at least one literal per clause
    is false and at least one literal per clause is true.
    
    It remains to argue that there is a crossing-free drawing with unit length
    if and only if the input \textsc{NotAllEqual3SAT} instance is a yes-instance.
    First assume that the given \textsc{NotAllEqual3SAT} instance is a yes-instance.
    Take a corresponding assignment $\Phi$ of truth values to variables.
    Draw the two-paths instance with only edge length~1 as illustrated in \cref{fig:nph}, that is,
    for each variable that is assigned true in~$\Phi$,
    draw its striker in the base orientation and,
    for each variable that is assigned false in~$\Phi$,
    draw its striker in the flipped orientation.
    Consequently, we have in every clause strip at least one flag with two short sides;
    let $F$ be such a short flag.
    Flip the long sides of all flags on the left of $F$ to the right,
    and flip the long sides of all flags on the right of $F$ to the left.
    Overall, this yields a crossing-free drawing with edge length~1 only.
    
    Now assume that there is a crossing-free drawing of the resulting two-paths instance with edge length~1 only.
    Consider the \PPG~$G$.
    There is only one way to draw a 4-cycle with unit length, that is, as a grid square.
    Hence, all 4-cycles look as in \cref{fig:nph}.
    In particular, 4-cycles sharing edges must have the same orientation.
    Observe that there are large collections of attached 4-cycles in the frame,
    in the rigid middle parts of the strikers (i.e., in the shaft),
    and in the flags.
    Moreover, there are some 6-, 8-, and 10-cycles in the frame.
    They can only be drawn as in \cref{fig:nph} since
    they share half of their edges en-block with several (rigid) 4-cycles.
    A similar argument applies to the 6-, 10-, and 14-cycles in the middle parts of the strikers,
    and to the 6-, 12-, and 20-cycles in the flags.
    The only remaining flexibility are the single edges connecting
    frame and strikers, two strikers, the middle part of a striker and a flag, and two flags,
    as well as the long chains of shared edges connecting the top-/bottommost flag
    with the middle part of the striker.
    Since these single edges are shared edges,
    the incident vertices have degree~3, which could allow for a 90 degrees rotation of the edge.
    However, it is easy to observe that a 90 degree rotation of some of these edges
    compared to \cref{fig:nph} would result in an overlap or intersection of
    the incident rigid parts of the drawing.
    Still, it allows for the flexibility of flipping whole strikers and whole flags (as desired).
    Now consider the long chains of shared edges connecting the top-/bottommost flag
    with the middle part of the striker.
    Each edge of it is drawn with unit length but, indeed, it can look different from \cref{fig:nph}.
    This, however, is not a problem, since the remainder of the drawing is connected by the
    previously described single edges which allow only a realization in the prescribed way.
    
    Now that we have established that all rigid parts and their relative arrangement is as in \cref{fig:nph},
    we need to assure that we have a yes-instance of \textsc{NotAllEqual3SAT}
    and we can read off a corresponding truth assignment.
    We get such an assignment by setting all variables to true whose striker
    is drawn in the base orientation and by setting all other variables to false.
    Due to the width of the long sides of the flags and the width of the gaps between the strikers and the frame,
    every long side of a flag in a clause strip must be in its own gap.
    Since there are only $n-1$ sufficiently wide gaps per clause strip,
    there is a flag with two short sides in each clause strip, which corresponds to a literal satisfying this clause~--
    once in the clause strip above the shaft, once in the clause strip below the shaft,
    which assures that not all literals of a clause have the same truth value.
\end{proof}

\end{document}